\newtheorem{rem}{Remark}
\newtheorem{defi}{Definition}
\newtheorem{prop}{Proposition}
\theoremstyle{proof}
\pgfplotsset{width=8cm,height=4cm,compat=1.9}
\begin{document}
%
\title{Comparison of Bounds for Optimal PMU Placement for State Estimation in Distribution Grids}


\author{Miguel~Picallo,
		Adolfo~Anta
		and~Bart~De~Schutter,~\IEEEmembership{Fellow,~IEEE}%
\thanks{This project has received funding from the European Union's Horizon 2020 research and innovation programme under the Marie Skł{\l}odowska-Curie grant agreement No 675318 (INCITE).}
\thanks{M. Picallo and B. De Schutter are with the Delft Center for Systems and Control, Delft University of Technology, The Netherlands
{\tt\small \{m.picallocruz,b.deschutter\}@tudelft.nl}}%
\thanks{A. Anta is with the Austrian Institute of Technology 
{\tt\small Adolfo.Anta@ait.ac.at}}%
} 




%


\maketitle

\begin{abstract}
The lack of measurements in distribution grids poses a severe challenge for their monitoring: since there may not be enough sensors to achieve numerical observability, load forecasts (pseudo-measurements) are typically used, and thus an accurate state estimation is not guaranteed. However, an estimation is required to control distribution grids given the increasing amount of distributed generation. Therefore, we consider the problem of optimal sensor placement to improve the state estimation accuracy in large-scale, 3-phase coupled, unbalanced distribution grids. This is a combinatorial optimization problem whose optimal solution is unpractical to obtain for large networks. We explore the properties of different metrics in the context of optimal experimental design, like convexity and modularity, to propose and compare several tight lower and upper bounds on the performance of the optimal solution. Moreover, we show how to use these bounds to choose near-optimal solutions. We test the method on two IEEE benchmark test feeders, the 123-bus and the 8500-node feeders, to show the effectiveness of the approach.
\end{abstract}


\begin{IEEEkeywords}
optimal sensor placement, phasor measurement units, distribution grid state estimation, submodular maximization, projected gradient descent, optimal design of experiments
\end{IEEEkeywords}

%
\IEEEpeerreviewmaketitle

\vspace{-0.02cm}
\section{Introduction}
\vspace{-0.01cm}
The operation of a power network requires accurate monitoring of its state: bus voltages, line currents, consumption and generation, to efficiently manage its controllable elements. State Estimation (SE) serves that purpose by estimating a minimum representation of the network state, for example the bus voltage phasors. Taking several measurements and the admittance matrix of the network as parameters, SE typically solves a weighted least-squares problem using an iterative approach like Newton-Raphson \cite{abur2004power, monticelli2000electric}. In transmission networks, SE is fundamental, since the volatile and distributed generation, which injects power in different locations of the network, causes fast-changing bidirectional power flows. In contrast, SE has not been so necessary in distribution grids until recently, since these grids used to have a simple radial structure with a single source bus injecting power. But this situation is changing due to the increasing penetration of distributed generation like PV panels, batteries, etc. \cite{ipakchi2009grid}.

The introduction of Phasor Measurement Units (PMUs) improves the monitoring of electrical networks. Some work in the literature is focused on using PMUs to achieve topological observability \cite{baldwin1993power}, which can be solved using integer linear programming \cite{gou2008generalized}. Although cheap PMUs \cite{pinte2015mpmu} are becoming available for mass deployment, their operational and network communication costs \cite{beg2016pmucom} may still prevent installing the required minimum number of sensors 
to achieve topological observability based only on PMUs, especially in distribution grids. If topological observability is not possible, neither is numerical observability ensured, which is required to solve the SE \cite{baldwin1993power}. As a result, conventional Supervisory Control And Data Adquisition (SCADA) measurements need to be combined with PMUs to solve the SE problem \cite{zhou2006alternative}. Therefore, authors of recent work \cite{li2011phasor, kekatos2012optimal} formulate the problem of optimal PMU placement in terms of maximising the SE accuracy. This accuracy is typically measured through some metric in the context of optimal design of experiments \cite{pukelsheim2006optimal}, applied to the covariance matrix of the SE error caused by noisy measurements and unreliable forecasts. As a result, the optimal PMU placement problem is a combinatorial optimization problem with a nonlinear objective function, the SE accuracy metric, and thus cannot be solved using linear programming. Moreover, as the size of the network increases, the optimal solution becomes unpractical to obtain given the number of possible combinations of measurements \cite{li2011phasor}.

In distribution grids, one of the major limitations for SE is the lack of sufficient real-time measurements to achieve observability, even using SCADA, and thus SE algorithms need to rely on pseudo-measurements, such as load forecasts. These pseudo-measurements have typically a large relative noise associated (approximately $50$\% \cite{schenato2014bayesian}), which causes a high uncertainty in the SE estimates \cite{picallo2017twostepSE}. As a result, there is a growing interest in using PMUs in distribution grids \cite{meier2014pmudg}, and some recent work proposes algorithms to place sensors to satisfy a desired performance, like algorithms using greedy and random combinations of sensors \cite{singh2009measurement, singh2011meter} or evolutionary algorithms \cite{liu2012tradeoff, Prasad2018tradeoff}. However, these approaches have no optimality guarantees. Since optimal solutions are unpractical in large networks, \cite{fusco2015plac} uses the results in \cite{li2011phasor} to derive lower bounds for the values of the optimal solution.

Our contribution consists in proposing and comparing several bounds for the optimal solution of the PMU placement problem in distribution grids, and extending these bounds to large-scale grids. With these bounds, we will be able to check the gap between any given suboptimal solution and the optimal solution. First, we prove properties like convexity and supermodularity for some metrics in \cite{pukelsheim2006optimal}; then, we use these properties to derive a combination of bounds based on convex optimization as in \cite{kekatos2012optimal}, and bounds based on supermodular minimization similar to \cite{li2011phasor}. These bounds allow us to obtain a close bound for the performance of the optimal solution of the problem under a cardinality constraint, as well as under a budget constraint and sensors with different costs. Additionally, we show how these bounds scale to large networks by using a projected gradient descent algorithm. Moreover, we show through two examples of test feeders, one of them large-scale, how bounds based on the supermodularity property perform better than those based on convexity if the number of sensors to deploy is small, as in distribution networks. 

The rest of the paper is structured as follows. Section \ref{sec:nomen} contains the nomenclature for the most relevant symbols. Section \ref{sec:grid} presents some background about power networks. Section \ref{sec:info} discusses the different types of measurements. Section \ref{sec:SEbasic} summarizes the newly proposed methodology for SE \cite{picallo2017twostepSE}. Section \ref{sec:metrics} presents the metrics considered for the problem and  their properties. Section \ref{sec:optplac} states the optimal sensor placement problem and derives the lower and higher bounds for each optimum value. Section \ref{sec:testcase} shows the effectiveness of the bounds on a test case. Finally, Section \ref{sec:conc} presents the conclusions. For clarity purposes, proofs can be found in the Appendices \ref{sec:appsupermodD} and \ref{sec:appproofbeta}. Appendices \ref{sec:appprojgrad} and \ref{sec:appgreedy} describe the methods to extend the approach to large grids.

\vspace{-0.02cm}
\section{Nomenclature}\label{sec:nomen}
\vspace{-0.02cm}
\setlength{\tabcolsep}{0.1cm}
\hspace{-0.5cm}
\begin{tabular}{ll}
\textbf{variables:} & \\
$V,I,S$ & Vectors of bus voltages, currents and \\ & apparent power\\
$V_\text{src},I_\text{src},S_\text{src}$ & Vectors of bus voltages, currents and \\ & apparent power at the source bus \\
$Y$ & Admittance matrix \\
$N$ & Number of nodes \\
$N_\text{meas}$ & Number of measurements \\
$\sigma_\text{psd}$ & Variance of the pseudo-measurements noise\\ 
$\sigma_\text{mag},\sigma_\text{ang}$ & Variance of the magnitude and angle noise \\ & of the real-time measurements\\ 
$z_\text{meas}, \Sigma_\text{meas}$ & Vector and covariance of measurements \\ 
$C_\text{meas}, \tilde{C}_\text{meas}$ & Matrices mapping state to measurements \\ 
$V_\text{prior}, V_\text{post}$ & Prior and posterior voltage estimation \\ 
$F$ & Subspace of the feasible voltage solutions \\
$\Sigma_\text{prior}, \Sigma_{F,\text{prior}}$ & Covariance of prior estimation \\ 
$\Sigma_\text{post}, \Sigma_{F,\text{post}}$ & Covariance of posterior estimation \\ 
$f_\text{A}(x), f_\text{D}(x)$ & Covariance metrics on the vector space \\
$\tilde{f}_\text{A}(X), \tilde{f}_\text{D}(X)$ & Covariance metrics on the set space \\
$x_\text{opt}, f_\text{opt}$ & Optimal solution and value of the original \\ & PMU allocation problems \eqref{eq:optpmu}, \eqref{eq:optpmubudget} \\
$x_\text{convex}, f_\text{convex}$ & Optimal solution and value of the problems \\ & with convex relaxations \eqref{eq:optpmuconvex}, \eqref{eq:optpmuconvexbudget} \\
$x_\text{feas}, f_\text{feas}$ & Feasible solution and value of \eqref{eq:optpmufeas}, \eqref{eq:optpmufeasbudget} \\
$x_\text{greedy}, f_\text{greedy}$ & Greedy solution and value of \eqref{eq:optpmugreedy} and \\ & Algorithm $\ref{alg:greedybudget}$ \\

\end{tabular}

\hspace{-0.5cm}
\begin{tabular}{ll}
\textbf{symbols:} & \\
$\abs{\cdot}$ & Magnitude of a complex number \\
$\bar{(.)},(\cdot)^*$ & Complex conjugate and conjugate transpose \\
diag$(\cdot)$ & Diagonal operator: vector to diagonal matrix \\
$(\cdot)_i$ & Element $i$ of a vector \\
$(\cdot)_{i,j}$ & Element in row $i$ and column $j$ of a matrix \\
$(\cdot)_{\varepsilon}$ & Elements of a vector at indices in $\varepsilon$\\
$(\cdot)_{j,\bullet}, (\cdot)_{\varepsilon,\bullet}$ & Row $j$ or rows with indices $\varepsilon$ of matrix \\
tr$(\cdot)$,det$(\cdot)$ & Trace and determinant of a matrix\\
$\nabla f$ & Gradient of $f$ \\
$\{\cdot\}$ & Set of elements \\
$(\cdot)^{(K)}$ & Value at iteration $K$ \\
$\Pi_\mathcal{X}(\cdot)$ & Projection onto the set $\mathcal{X}$ \\
\end{tabular}

\vspace{-0.02cm}
\section{State Estimation in Distribution Grids}\label{sec:SEingrid}
\vspace{-0.02cm}
\subsection{Distribution Grid Model}\label{sec:grid}
\vspace{-0.01cm}
A distribution grid consists of buses, where power is injected or consumed, and branches, each connecting two buses. This system can be modeled as a graph $\mathcal{G}=(\mathcal{V},\mathcal{E},\mathcal{W})$ with nodes $\mathcal{V}=\{1,...,N_\text{bus}\}$ representing the buses, edges $\mathcal{E}=\{(v_i,v_j)\mid v_i,v_j \in \mathcal{V}\}$ representing the branches, and edge weights $\mathcal{W}=\{w_{i,j}\mid (v_i,v_j) \in \mathcal{E}\}$ representing the admittance of the branches, which are determined by the length and type of the line cables.
In 3-phase networks buses may have up to 3 phases, so that the voltage at bus $i$ is $V_{\text{bus},i} \in \mathbb{C}^{n_{\phi,i}}$, where $n_{\phi,i}\leq 3$ (and the edge weights $w_{i,j}\in \mathbb{C}^{n_{\phi,i} \times n_{\phi,j}}$). The state of the network is then typically represented by the vector bus voltages $V_\text{bus}=[V_\text{src}^T, \; V^T]^T \in \mathbb{C}^{N+3}$, where $V_{\text{src}} \in \mathbb{C}^3$ denotes the known voltage of the 3 phases at the source bus, and $V \in \mathbb{C}^N$ the voltages in the non-source buses, where $N$ depends on the number of buses and phases per bus. Then, using the Laplacian matrix $Y \in \mathbb{C}^{(N+3) \times (N+3)}$ of the weighted graph $\mathcal{G}$, called admittance matrix \cite{abur2004power}, the power flow equations to compute the currents $I$ and the power loads $S$ are:
\begin{equation}\label{eq:PFeq}\arraycolsep=1pt
\begin{array}{c}
\left[\begin{array}{c} I_{\text{src}} \\ I \end{array}\right] = 
Y\left[\begin{array}{c} V_{\text{src}} \\ V \end{array}\right], \; S = \text{diag}(\bar{I})V
\end{array}
\end{equation}

\vspace{-0.02cm}
\subsection{Measurements} \label{sec:info}
\vspace{-0.01cm}
As explained in \cite{picallo2017twostepSE}, several different sources of information can be available to solve the SE problem:
\begin{enumerate}[leftmargin=*]
\item \textit{Pseudo-measurements}, i.e., load estimations $S_\text{psd}$ based on predictions and/or known installed load capacity at every bus. Since these pseudo-measurements are estimations rather than actual measurements, we model their uncertainty using a Gaussian noise with a relative large standard deviation (a typical value can be $\sigma_\text{psd} \approx 50\%$ \cite{schenato2014bayesian}).

\item \textit{Virtual measurements}, i.e., buses with zero-injections, no loads connected. They can be represented as physical constraints for the voltage states by defining the set of indices of zero-injection buses $\varepsilon=\{i,\cdots,j\}$:
\begin{equation}\label{eq:Scons}
(S)_{\varepsilon}=0, \; (I)_{\varepsilon}=0
\end{equation} 

\item \textit{Real-time PMU measurements}, i.e., voltage and current GPS-synchronized measurements of magnitude and phase angle. According to the IEEE standard for PMU \cite{martin2008exploring}, these measurements may have a small error. Again, we model this uncertainty using a Gaussian noise with a low standard deviation for the magnitude and the angle, $\sigma_\text{mag} \approx 1\%$ and $\sigma_\text{ang} \approx 0.01\text{ rad}$ respectively. They can be expressed using a linear approximation \cite{picallo2017twostepSE} with magnitude and angle noise due to the measurements and imperfect synchronization. For a number of $N_\text{meas}$ measurements $z_\text{meas} \in \mathbb{C}^{N_\text{meas}}$ we have
\begin{equation}\label{eq:Lmeas}
\begin{array}{c}
z_\text{meas} \approx  C_\text{meas}V + \text{diag}(C_\text{meas}V)(\omega_{\text{mag}} + j\omega_{\text{ang}})
\end{array}
\end{equation}
where $\omega_\text{mag}, \omega_\text{ang}$ are the Gaussian noises with mean $0$ and standard deviation $\sigma_\text{mag}, \sigma_\text{ang}$ respectively: $\omega_\text{mag} \hspace{-0.1cm} \sim \hspace{-0.1cm} \mathcal{N}(0,\sigma_\text{mag}I_{\text{d},N_\text{meas}})$, $\omega_\text{ang} \hspace{-0.1cm} \sim \hspace{-0.1cm} \mathcal{N}(0,\sigma_\text{ang}I_{\text{d},N_\text{meas}})$, with $I_{\text{d},n}$ denoting the identity matrix of dimension $n$, and where the matrix $C_\text{meas}$ maps state voltages to measurements; it relates the values of the measurements to values of state voltages. Then, for measurement $j$ at phase $l$ of bus $i$ we have
\begin{equation}\label{eq:LmeasMap}\arraycolsep=1pt\begin{array}{l}
(C_\text{meas}V)_j = (C_\text{meas})_{j,\bullet}V= \\[0.1cm]
\left\lbrace \begin{array}{ll}
V_{i_l} & \mbox{for a voltage measurement} \\[0.0cm]
(Y)_{i_l,\bullet}V & \mbox{for a current measurement} \\[0.0cm]
(Y)_{i_l,m_l}(V_{i_l}-V_{m_l}) & \mbox{for a branch-current $i \to m$} \\[-0.0cm]
&  \mbox{measurement}
\end{array} \right.
\end{array}
\end{equation}
Since the measurement noises in \eqref{eq:Lmeas} are small according to the PMUs standard \cite{martin2008exploring}, their covariance matrices can be approximated using the measurements:
\begin{equation*}
\begin{array}{rl}
\Sigma_\text{meas} & = (\sigma_\text{mag}^2+\sigma_\text{ang}^2) \text{diag}(\:\abs{C_\text{meas}V}^2) \\
& \approx (\sigma_\text{mag}^2+\sigma_\text{ang}^2) \text{diag}(\:\abs{z_\text{meas}}^2)
\end{array}
\end{equation*}
\end{enumerate}
\vspace{-0.02cm}
\subsection{State Estimation}\label{sec:SEbasic}
\vspace{-0.01cm}
Typically, SE consists in finding the voltages that best match the measurements by solving a weighted least-squares problem \cite{abur2004power}. As proposed in \cite{picallo2017twostepSE}, SE can be decomposed in two parts: First, using the pseudo-measurement estimations or predictions for the loads $S_\text{psd}$, we solve the power flow  offline to obtain a prior estimate $V_\text{prior}$: 
\begin{equation}\label{eq:PF}\begin{array}{l}
V_\text{prior} = \text{PowerFlow}(S_\text{psd})
\end{array}
\end{equation}
Then, using the real-time PMU measurements $z_\text{meas}$, a posterior solution $V_\text{post}$ can be derived using a linear filter:
\begin{equation}\label{eq:update}\begin{array}{l}
V_\text{post} = V_\text{prior} + K(z_\text{meas}-C_\text{meas}V_\text{prior}) 
\end{array}
\end{equation}
where the gain matrix $K$ is obtained by minimizing the error covariance $\Sigma_\text{post}=\mathbb{E}[(V_\text{post}-V)^*(V_\text{post}-V)]$:
\begin{equation}\label{eq:sigmapost}
\arraycolsep=1pt
\begin{array}{rl}
\Sigma_\text{post} =&  \Sigma_\text{prior} + K(\Sigma_\text{meas}+C_\text{meas}\Sigma_\text{prior}C_\text{meas}^*)K^* \\[0.1cm]
&-KC_\text{meas}\Sigma_\text{prior}-\Sigma_\text{prior}C_\text{meas}^*K^* 
\\[0.1cm]
K = & \arg\min_K \text{tr}(\Sigma_\text{post}) \\[0.1cm]
= & \Sigma_\text{prior}C_\text{meas}^*(C_\text{meas}\Sigma_\text{prior}C_\text{meas}^*+\Sigma_\text{meas})^{-1}
\end{array}
\end{equation}
where $\Sigma_\text{prior}$ and $\Sigma_\text{meas}$ are the expected error covariance of the prior estimate $V_\text{prior}$ and the measurements $z_\text{meas}$ respectively. 

\begin{rem}
Extra non-synchronized real-time measurements, like magnitude measurements from SCADA, can be included in the posterior update \eqref{eq:update} for a greater improvement of the posterior estimate $V_\text{post}$ by using the first-order approximation of the measurement function \cite{picallo2017twostepSE}. If a multi-stage PMU deployment is considered \cite{aminafar2011multistage}, prior installed PMUs can be also considered as extra measurements in the same manner.
\end{rem}

\begin{rem}
As shown in \cite{zhou2006alternative}, splitting the problem in two steps yields the same first-order approximation as solving the problem in one step. Moreover, the posterior minimum-variance estimator using the linear update \eqref{eq:update}, is equal to the maximum-likelihood using a weighted least-squares approach \cite{picallo2017twostepSE}. Therefore, we can conclude that for an SE method that assumes Gaussian noises and performs a maximum likelihood estimation, the posterior covariance will be approximately the one in \eqref{eq:sigmapost}, and thus the method developed here for optimal sensor placement can be also extended for other SE techniques satisfying these conditions.
\end{rem}

Since $V_\text{post}$ in \eqref{eq:update} is an unbiased estimator, SE accuracy can be defined as a function of the posterior covariance matrix $\Sigma_\text{post}$ in \eqref{eq:sigmapost}, which needs to be minimized to improve the SE accuracy. This motivates a deeper analysis of $\Sigma_\text{post}$: if there are zero-injection buses at indices $\varepsilon$ when solving \eqref{eq:PF}, then $\Sigma_\text{prior}$ will not be of full rank and thus will not be invertible. For convenience, we consider the restricted subspace $\{V\mid (Y)_{\varepsilon,\bullet}V=0\}$ and the linear transformation proposed in \cite{picallo2017twostepSE} to represent the space of feasible solutions: $V=Fx+V_0$ with $x \in \mathbb{C}^{N-\abs{\varepsilon}}$, where $\abs{\varepsilon}$ is the cardinality of $\varepsilon$ and $F$ is the null space of $(Y)_{\varepsilon,\bullet}$: $F=\text{ker}((Y)_{\varepsilon,\bullet})\in \mathbb{C}^{N\times N-\abs{\varepsilon}}$, so that $F^*F=I_\text{d}$, and $V_0$ denotes the voltage under zero loads. Then we have $\Sigma_\text{prior}=F\Sigma_{F,\text{prior}}F^*$, where $\Sigma_{F,\text{prior}}$ is the covariance of $x$. After some manipulations of \eqref{eq:sigmapost}, the resulting error covariance $\Sigma_\text{post}$ ($\Sigma_{F,\text{post}}$ in this subspace) for the posterior estimation $V_\text{post}$ can be expressed as
\begin{equation}\label{eq:sigmaPost2}\arraycolsep=1pt\begin{array}{c}
\Sigma_\text{post}=F(\Sigma_{F,\text{prior}}^{-1}+(C_\text{meas}F)^*\Sigma_\text{meas}^{-1}C_\text{meas}F)^{-1}F^* \\[0.1cm]
\Sigma_{F,\text{post}}=(\Sigma_{F,\text{prior}}^{-1}+(C_\text{meas}F)^*\Sigma_\text{meas}^{-1}C_\text{meas}F)^{-1}
\end{array}
\end{equation}

In the presence of zero-injection buses, $\abs{\varepsilon}>0$, and then $\Sigma_\text{post}\in \mathbb{C}^{N\times N}$ is not full rank: $\text{rank}(\Sigma_\text{post})=N-\abs{\varepsilon}$. It has the same rank as $\Sigma_{F,\text{post}}\in\mathbb{C}^{(N-\abs{\varepsilon}) \times (N-\abs{\varepsilon})}$, which is full rank since $\Sigma_{F,\text{prior}}$ is full rank. Moreover, the eigenvalues of $\Sigma_{F,\text{post}}$ are all eigenvalues of $\Sigma_{\text{post}}$, because for any eigenvector $v$ of $\Sigma_{F,\text{post}}$, $Fv$ is an eigenvector of $\Sigma_{\text{post}}$ with the same eigenvalue, since $F^*F=I_\text{d}$. Consequently, we can analyze $\Sigma_{F,\text{post}}$ instead of $\Sigma_{\text{post}}$. These eigenvalues represent the lengths of the axes of the confidence ellipsoid \cite{boyd2004convex}. Moreover, since measurement errors are caused separately by each sensor, they are independent, i.e. $\Sigma_\text{meas}$ is diagonal, and we can split $\Sigma_{F,\text{post}}$ by every measurement:
\begin{equation}\label{eq:sigmaPostfx}\arraycolsep=1pt\begin{array}{l}
\Sigma_{F,\text{post}}
=(\Sigma_{F,\text{prior}}^{-1}+\sum_i (C_\text{meas}F)_{i,\bullet}^*(C_\text{meas}F)_{i,\bullet}(\Sigma_\text{meas}^{-1})_{i,i})^{-1}\\[0.1cm]
=(\Sigma_{F,\text{prior}}^{-1}+\sum_i x_i(\tilde{C}_\text{meas}F)_{i,\bullet}^*(\tilde{C}_\text{meas}F)_{i,\bullet}(\Sigma_\text{meas}^{-1})_{i,i})^{-1}\\[0.1cm]
= \Sigma_{F,\text{post}}(x)
\end{array}
\end{equation}
where $x_i\in \{0,1\}$, $x_i=1$ if the physical quantity $i$ has a sensor measuring its value, $0$ otherwise; and $\tilde{C}_\text{meas}$ is the special case of  $C_\text{meas}F$ with all possible measurements of all types (bus voltage, bus current, and line current) for all nodes and lines in each phase.

\begin{defi}
In order to improve the accuracy of the SE, the problem of optimal sensor placement consists in minimizing $\Sigma_{\text{post}}(x)$, equivalently $\Sigma_{F,\text{post}}(x)$ in \eqref{eq:sigmaPostfx},
according to a metric $m(\cdot)$ and under a set of constraints $h(\cdot)$ to limit the number of sensors or the total cost:
\begin{equation}\label{eq:optprob}
\min_x m(\Sigma_{F,\text{post}}(x)) \text{ s.t. } h(x)\leq 0, \; x_i \in \{0,1\} \; \forall i
\end{equation}
For simplicity, we define: $ f(x)\equiv m(\Sigma_{F,\text{post}}(x))$.
\end{defi}

\begin{rem}
Problem \eqref{eq:optprob} is equivalent to minimizing the number of sensors or the total cost of the sensors, while enforcing a given accuracy, i.e., satisfying a performance threshold $\lambda$ of the metric: $\min_{\{x\mid f(x)\leq \lambda, \; x_i \in \{0,1\} \; \forall i \}} h(x)$, as in \cite{singh2009measurement, singh2011meter, liu2012tradeoff, Prasad2018tradeoff}. To achieve that, the problem \eqref{eq:optprob} can be solved for different numbers of sensors or budgets until the performance value is below the threshold: $\min_{\{x\mid h(x)\leq 0, \; x_i \in \{0,1\} \; \forall i \}} f(x) \leq \lambda$. The representation in \eqref{eq:optprob}, as used in \cite{li2011phasor, kekatos2012optimal, fusco2015plac} is preferred due to the availability of a computationally efficient optimization algorithm to solve the problem, see Appendix \ref{sec:appprojgrad}.
\end{rem}

\vspace{-0.02cm}
\section{Metrics for Sensor Placement}\label{sec:metrics}
\vspace{-0.01cm}
There are many possible metrics $f(x)$ available in the context of optimal design of experiments \cite{pukelsheim2006optimal}. Concretely, we will focus on the A-\textit{optimal} and the D-\textit{optimal} metrics, because of the properties that we will show later. Nonetheless, the results of this paper could be generalized to other metrics with the same properties.
\begin{defi}
The A-\textit{optimal} and the D-\textit{optimal} metrics are defined as follows:
\begin{equation}\begin{array}{l}
\textnormal{A-}optimal: f_\textnormal{A}(x):=\textnormal{tr}(\Sigma_{\textnormal{post}}(x))=\textnormal{tr}(\Sigma_{F,\textnormal{post}}(x)) \\[0.1cm]
\textnormal{D-}optimal: f_\textnormal{D}(x):=\log(\textnormal{det}(\Sigma_{F,\textnormal{post}}(x)))
\end{array}
\end{equation}
\end{defi}
The A-\textit{optimal} metric represents the sum of the eigenvalues of $\Sigma_{F,\text{post}}$ and thus the sum of the lengths of axes of the confidence ellipsoid \cite{boyd2004convex}. This is the metric typically used for SE methods, since standard SE maximizes the log-likelihood through a weighted least-squares minimization \cite{abur2004power}, which is equivalent to the minimum-variance estimator using the trace \cite{picallo2017twostepSE}. The D-\textit{optimal} metric is the natural logarithm of the product of the eigenvalues of $\Sigma_{F,\text{post}}$ and it is related to the logarithm of the volume of the confidence ellipsoid \cite{boyd2004convex}. These metrics have several properties relevant to the problem of optimal sensor placement:

\begin{enumerate}[leftmargin=*]
\item \textit{Convexity}:
When relaxing $x_i$ in \eqref{eq:optprob} to be continuous, $x_i \in [0,1]$, the metrics are convex on $x$ \cite[Section~7.5]{boyd2004convex} \cite{kekatos2012optimal}, and thus a global optimum can be computed efficiently. Note that despite $\log(x)$ being concave for $x \in \mathbb{R}$, $\log(\text{det}(X))$ is convex for $X \in \mathbb{R}^{n \times n}$ \cite{boyd2004convex}.

\item \textit{Gradient computation}:
The gradients $\nabla f$ for $f \in \{f_\text{A},f_\text{D}\}$ can be derived analytically using matrix calculus \cite{matrixcalc}:
\begin{equation}\label{eq:grad}\arraycolsep=1pt\begin{array}{rl}
(\nabla f_\text{A}(x))_i = & -\text{tr}\Big(\Sigma_{F,\text{post}}^2(x)
(\tilde{C}_\text{meas})_{i,\bullet}^*(\tilde{C}_\text{meas})_{i,\bullet}\Big) (\Sigma_\text{meas}^{-1})_{i,i} \\[0.1cm]

(\nabla f_\text{D}(x))_i = & -\text{tr}\Big(\Sigma_{F,\text{post}}(x)
(\tilde{C}_\text{meas})_{i,\bullet}^*(\tilde{C}_\text{meas})_{i,\bullet}\Big) (\Sigma_\text{meas}^{-1})_{i,i} 
\end{array}
\end{equation}
These expressions will be necessary when developing gradient methods to optimize the placement of sensors in large networks, see Appendix \ref{sec:appprojgrad}.
 
\item \textit{Monotonicity}:
Moreover, these metrics can be seen as set functions by defining the sets $X=\{i\mid x_i=1\}$. Note that there is a bijection between both, i.e., $i\in X \iff x_i=1$. Therefore, we will use the notation $X$ and $x$ to denote the set and the vector respectively, and $\hat{f}(X)=f(x)$ to denote the metric functions applied to them. Defining
\begin{equation*}
\Lambda_X :=  \sum_{i\in X}(\tilde{C}_\text{meas})_{i,\bullet}^*(\tilde{C}_\text{meas})_{i,\bullet}(\Sigma_\text{meas}^{-1})_{i,i} \succeq 0
\end{equation*}
the posterior error covariance can be expressed as $\Sigma_{F,\text{post}}(X)= (\Sigma_{F,\text{prior}}^{-1}+\Lambda_X)^{-1}$.
\begin{prop}\label{prop:monotone}
$\hat{f}_\mathrm{A}(X),\hat{f}_\mathrm{D}(X)$ are monotone decreasing.
\end{prop}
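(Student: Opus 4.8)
The plan is to establish everything in the Loewner (positive-semidefinite) order on the posterior covariance matrices and then transfer that ordering to the two scalar metrics using the monotonicity of trace and determinant. Here ``monotone decreasing'' means that enlarging the sensor set can only lower the metric, i.e.\ $X \subseteq Y \Rightarrow \hat{f}_\mathrm{A}(X) \geq \hat{f}_\mathrm{A}(Y)$ and likewise for $\hat{f}_\mathrm{D}$. So I would fix an arbitrary pair $X \subseteq Y$ and compare the information matrices $\Lambda_X$ and $\Lambda_Y$.

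First I would argue that $\Lambda_X \preceq \Lambda_Y$. From the definition, $\Lambda_Y - \Lambda_X = \sum_{i \in Y \setminus X} (\tilde{C}_\text{meas})_{i,\bullet}^* (\tilde{C}_\text{meas})_{i,\bullet} (\Sigma_\text{meas}^{-1})_{i,i}$, and each summand is positive semidefinite: it is a Hermitian rank-one matrix $(\tilde{C}_\text{meas})_{i,\bullet}^* (\tilde{C}_\text{meas})_{i,\bullet} \succeq 0$ scaled by the nonnegative weight $(\Sigma_\text{meas}^{-1})_{i,i} > 0$ (recall $\Sigma_\text{meas}$ is diagonal with positive entries). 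A sum of PSD matrices is PSD, so $\Lambda_Y - \Lambda_X \succeq 0$. Adding $\Sigma_{F,\text{prior}}^{-1}$ to both sides preserves the order, giving $\Sigma_{F,\text{prior}}^{-1} + \Lambda_X \preceq \Sigma_{F,\text{prior}}^{-1} + \Lambda_Y$; since $\Sigma_{F,\text{prior}}$ is full rank, both of these information matrices are positive definite and hence invertible.

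Next I would invoke the antitonicity of matrix inversion on the positive-definite cone: if $0 \prec A \preceq B$ then $B^{-1} \preceq A^{-1}$. Applied to the two information matrices, using the representation $\Sigma_{F,\text{post}}(X) = (\Sigma_{F,\text{prior}}^{-1} + \Lambda_X)^{-1}$, this yields the central covariance ordering $\Sigma_{F,\text{post}}(Y) \preceq \Sigma_{F,\text{post}}(X)$ --- adding measurements shrinks the posterior covariance in the Loewner order. Finally I would push this matrix inequality through each metric. For the A-optimal metric, $\Sigma_{F,\text{post}}(X) - \Sigma_{F,\text{post}}(Y) \succeq 0$ has nonnegative eigenvalues, so its trace is nonnegative and $\hat{f}_\mathrm{A}(X) = \text{tr}(\Sigma_{F,\text{post}}(X)) \geq \text{tr}(\Sigma_{F,\text{post}}(Y)) = \hat{f}_\mathrm{A}(Y)$. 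For the D-optimal metric, $0 \prec \Sigma_{F,\text{post}}(Y) \preceq \Sigma_{F,\text{post}}(X)$ implies $\det(\Sigma_{F,\text{post}}(Y)) \leq \det(\Sigma_{F,\text{post}}(X))$ by the monotonicity of the determinant on the PD cone, and taking logarithms gives $\hat{f}_\mathrm{D}(X) \geq \hat{f}_\mathrm{D}(Y)$.

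The only genuinely nontrivial ingredient --- the ``hard part'' --- is the antitonicity of inversion under $\preceq$; the PSD-difference argument for $\Lambda_X \preceq \Lambda_Y$ and the trace/determinant monotonicity steps are routine bookkeeping. Everything hinges on $\Sigma_\text{meas}$ being diagonal with positive diagonal (so each information contribution is PSD) and on $\Sigma_{F,\text{prior}}$ being full rank (so the inverses exist and strict positive definiteness is maintained), both of which are guaranteed by the setup in \eqref{eq:sigmaPostfx}.
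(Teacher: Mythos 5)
Your proposal is correct and follows essentially the same route as the paper's proof: show $\Lambda_Y - \Lambda_X = \Lambda_{Y\setminus X} \succeq 0$, add $\Sigma_{F,\text{prior}}^{-1}$ to preserve the Loewner order, apply the antitonicity of inversion on the positive-definite cone, and transfer the resulting covariance ordering through the metrics. You merely make explicit the final trace/log-det monotonicity steps that the paper compresses into a single ``consequently,'' which is a harmless elaboration rather than a different argument.
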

\begin{proof}
Consider the sets of sensors $X\subseteq Y$. Since $\Lambda_Y-\Lambda_X=\Lambda_{Y\setminus X}\succeq 0$, we have $\Lambda_Y\succeq\Lambda_X$. Then $\Sigma_{F,\text{prior}}^{-1}+\Lambda_Y\succeq \Sigma_{F,\text{prior}}^{-1}+\Lambda_X \succeq 0$ and thus $(\Sigma_{F,\text{prior}}^{-1}+\Lambda_X)^{-1}\succeq (\Sigma_{F,\text{prior}}^{-1}+\Lambda_Y)^{-1}$. Consequently, for any of the functions $\hat{f}\in \{\hat{f}_\text{A},\hat{f}_\text{D}\}$, $\hat{f}(X)\geq \hat{f}(Y)$.
\end{proof}

\item \textit{Modularity}: 
Submodularity/supermodularity are a sort of concavity/convexity properties when considering set functions. The increment of a set function when adding a new element diminishes (submodular) or increases (supermodular) as the set gets larger.
\begin{defi}
Given a finite set $\Omega$, a submodular/supermodular function \cite{nemhauser1978analysis} is a set function $\hat{f}:2^\Omega \to \mathbb{R}$, with $\hat{f}(\emptyset)=0$, where $\emptyset$ is the empty set, so that for any element $a \in \Omega$ and two subsets $X,Y$ so that $X \subseteq Y \subseteq \Omega \setminus \{a\}$ we have
\begin{equation*}\arraycolsep=1pt
\begin{array}{ll}
\text{submodular:} & \hat{f}(Y \cup \{a\})-\hat{f}(Y) \leq \hat{f}(X \cup \{a\})-\hat{f}(X) \\[0.1cm]
\text{supermodular:} & \hat{f}(Y \cup \{a\})-\hat{f}(Y) \geq \hat{f}(X \cup \{a\})-\hat{f}(X) 
\end{array}
\end{equation*}
\end{defi}
\begin{prop}\label{prop:supermodular}
$\hat{f}_\mathrm{D}(X)-\hat{f}_\mathrm{D}(\emptyset)$ is supermodular.
\end{prop}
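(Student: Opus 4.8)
The plan is to reduce the supermodularity of $\hat{f}_\mathrm{D}$ to the order-reversing property of the matrix inverse under a rank-one update. First I would observe that the constant $\hat{f}_\mathrm{D}(\emptyset)$ cancels in every marginal difference $\hat{f}(S\cup\{a\})-\hat{f}(S)$ that appears in the definition of supermodularity, so it suffices to establish the increment inequality for $\hat{f}_\mathrm{D}$ itself. Using $\Sigma_{F,\text{post}}(X)=(\Sigma_{F,\text{prior}}^{-1}+\Lambda_X)^{-1}$ and $\det(M^{-1})=1/\det(M)$, I would rewrite the metric as
\[
\hat{f}_\mathrm{D}(X)=\log\det\bigl((\Sigma_{F,\text{prior}}^{-1}+\Lambda_X)^{-1}\bigr)=-\log\det(A_X),\qquad A_X:=\Sigma_{F,\text{prior}}^{-1}+\Lambda_X\succ 0,
\]
where $A_X$ is Hermitian positive definite, so every determinant below is real and positive and the logarithm is well defined.

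Next I would exploit the fact that adjoining a single element $a$ to any set $S$ is a Hermitian rank-one update of $A_S$. Writing $w_a:=(\tilde{C}_\text{meas})_{a,\bullet}^*\sqrt{(\Sigma_\text{meas}^{-1})_{a,a}}$ so that $\Lambda_{\{a\}}=w_a w_a^*$, we have $A_{S\cup\{a\}}=A_S+w_a w_a^*$. Applying the matrix determinant lemma, $\det(A_S+w_a w_a^*)=\det(A_S)\,(1+w_a^* A_S^{-1} w_a)$, so the marginal gain collapses to a scalar expression:
\[
\hat{f}_\mathrm{D}(S\cup\{a\})-\hat{f}_\mathrm{D}(S)=-\log\bigl(1+w_a^* A_S^{-1} w_a\bigr)=-\log\bigl(1+w_a^*\Sigma_{F,\text{post}}(S)\,w_a\bigr).
\]
Because $A_S^{-1}=\Sigma_{F,\text{post}}(S)$ is positive definite, the quadratic form $w_a^* A_S^{-1} w_a$ is real and nonnegative.

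Finally I would compare the two marginal gains for $X\subseteq Y\subseteq\Omega\setminus\{a\}$. Exactly as in the proof of Proposition \ref{prop:monotone}, $\Lambda_Y-\Lambda_X=\Lambda_{Y\setminus X}\succeq 0$ gives $A_Y\succeq A_X\succ 0$; since the inverse is order-reversing on positive definite matrices, $A_Y^{-1}\preceq A_X^{-1}$, and hence $w_a^* A_Y^{-1} w_a\leq w_a^* A_X^{-1} w_a$. As $t\mapsto-\log(1+t)$ is decreasing for $t\geq 0$, this yields $\hat{f}_\mathrm{D}(Y\cup\{a\})-\hat{f}_\mathrm{D}(Y)\geq\hat{f}_\mathrm{D}(X\cup\{a\})-\hat{f}_\mathrm{D}(X)$, which is precisely the supermodularity inequality. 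The main thing to be careful about is bookkeeping rather than ideas: I must confirm that $\Lambda_{\{a\}}$ is genuinely rank one so the determinant lemma applies in its scalar form, and that working over $\mathbb{C}$ with Hermitian matrices keeps all the relevant quantities real. The structural heart of the argument is simply the implication $A_X\preceq A_Y\Rightarrow A_X^{-1}\succeq A_Y^{-1}$.
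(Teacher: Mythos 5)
Your proof is correct, and it takes a genuinely different route from the paper's. The paper proves Proposition~\ref{prop:supermodular} by a continuous interpolation argument (patterned on the controllability-Gramian proof in \cite{summers2016submodularity}): it defines $\tilde{f}_\text{D}(\gamma,X,Y)=\log\det\big((\Sigma_{F,\text{prior}}^{-1}+\Lambda_X+\gamma\Lambda_{Y\setminus X})^{-1}\big)$, differentiates the increment $\tilde{g}_a$ with respect to $\gamma$, shows $\partial\tilde{g}_a/\partial\gamma\geq 0$ via the same Loewner order-reversal you invoke, and integrates from $\gamma=0$ to $\gamma=1$. You instead collapse the marginal gain to the closed scalar form $\hat{f}_\text{D}(S\cup\{a\})-\hat{f}_\text{D}(S)=-\log\big(1+w_a^*A_S^{-1}w_a\big)$ via the rank-one matrix determinant lemma and compare quadratic forms directly; this is shorter, avoids calculus entirely, and in fact reuses an identity the paper itself derives for a different purpose in Appendix~\ref{sec:appgreedy} (Sylvester's determinant identity for fast greedy evaluation), so your marginal-gain formula is consistent with the paper's own computations. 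Your bookkeeping concerns check out: $\Lambda_{\{a\}}=(\tilde{C}_\text{meas})_{a,\bullet}^*(\tilde{C}_\text{meas})_{a,\bullet}(\Sigma_\text{meas}^{-1})_{a,a}$ is genuinely rank one because each index $a$ corresponds to a single row of $\tilde{C}_\text{meas}$ and $\Sigma_\text{meas}$ is diagonal with positive entries (so $\sqrt{(\Sigma_\text{meas}^{-1})_{a,a}}$ is real), $A_X\succ 0$ holds since $\Sigma_{F,\text{prior}}$ is full rank, and Hermitian structure keeps $w_a^*A_S^{-1}w_a$ real and nonnegative. The trade-off: your argument hinges on each added element being a rank-one update, whereas the paper's interpolation proof goes through verbatim when an element $a$ contributes an arbitrary-rank PSD increment $\Lambda_{\{a\}}\succeq 0$ (e.g., a sensor delivering several measurements at once), since it never needs a scalar determinant formula. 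Both proofs share the same structural heart, $0\prec A\preceq B\Rightarrow B^{-1}\preceq A^{-1}$, and both correctly note that the constant $\hat{f}_\text{D}(\emptyset)$ cancels in every marginal difference.
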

\begin{proof}
See Appendix \ref{sec:appsupermodD}
\end{proof}
This means that for $f_\text{D}$, there exist bounds between greedy solutions and optimal solutions, as we will see later in \eqref{eq:boundsubmod} and \eqref{eq:boundsubmodbudget}. These bounds will allow us to provide a limit on the values of the optimal solution, and thus to check how far the values of other suboptimal solutions can be from the value of the optimal solution.

\begin{rem}
In \cite{li2011phasor} was proven that if the rows in $\tilde{C}_\text{meas}$ are orthogonal, $\hat{f}_\mathrm{A}(X)-\hat{f}_\mathrm{A}(\emptyset)$ is supermodular, and lower bounds based on the supermodularity property can also be used for the A-optimal metric. However, the rows of $\tilde{C}_\text{meas}$ are in general not orthogonal due to the different types of sensors in \eqref{eq:LmeasMap}, the electrical connections represented in $Y$, and the effect of the reduced subspace $F$.
\end{rem}
\end{enumerate}

\vspace{-0.02cm}
\section{Optimal Sensor Placement}
\vspace{-0.01cm}
\label{sec:optplac}
So far we have looked at the cost function in \eqref{eq:optprob}, now we will focus on the constraints. In this section we state the optimal sensor placement problem under two different constraints: a cardinality constraint limiting the number of sensors, which is the most typical approach in the literature \cite{li2011phasor, kekatos2012optimal, schenato2014bayesian}, and a budget constraint limiting the total cost of the deployed sensors, which may have different costs. For each case, we derive the respective lower and upper bounds on the optimal performance, using the metrics properties derived in Section \ref{sec:metrics}.

\vspace{-0.02cm}
\subsection{Solutions with cardinality constraint}
For any of the metrics $f \in \{f_\text{A},f_\text{D}\}$, the problem of optimal placement of $N_\text{meas}$ sensors is
\begin{equation}\label{eq:optpmu}
x_\text{opt} = \arg\min_x f(x) \text{ s.t. } \sum_i x_i \leq N_\text{meas},\; x_i\in \{0,1\}\; \forall i
\end{equation}
where $N_\text{meas}$ indicated the maximum number of sensors. We denote the value of the optimal solution of \eqref{eq:optpmu} $f_\text{opt} = f(x_\text{opt})$.

Similar to \cite{joshi2009sensor,chepuri2015sparsity} for general estimation problems and in \cite{kekatos2012optimal} for transmission power networks, we relax the constraints in \eqref{eq:optpmu}, $x_i \in [0,1]$, to get a continuous convex problem using
\begin{equation}\label{eq:optpmuconvex}
x_\text{convex} = \arg\min_x f(x) \text{ s.t. } \sum_i x_i\leq N_\text{meas},\; x_i\in  [0,1]\; \forall i
\end{equation}
We denote the value of the optimal solution of \eqref{eq:optpmuconvex} as $f(x_\text{convex})=f_\text{convex}$. However, this solution will not necessarily be feasible to \eqref{eq:optpmu}. To create a feasible solution $x_\text{feas}$ (with respective value $f(x_\text{feas})=f_\text{feas}$), we can take the largest $N_\text{meas}$ values of $x_\text{convex}$, set them to $1$ and the others to $0$:
\begin{equation}\label{eq:optpmufeas}
x_{\text{feas},i} = \left\lbrace\begin{array}{l}
1 \mbox{ if }x_{\text{convex},i} \geq x_{\text{convex},k} \\[0.05cm] 
0 \mbox{ otherwise} 
\end{array}\right.
\end{equation}
where $k$ is such that $\abs{ \{ i\mid x_{\text{convex},i}\geq x_{\text{convex},k}\} }=N_\text{meas}$. Ties are broken arbitrarily if there is more than one $x_{\text{convex},i}$ with value $x_{\text{convex},k}$, i.e. if $\abs{ \{ i\mid x_{\text{convex},i}= x_{\text{convex},k}\} } >1$. This also applies for further possible ties throughout the paper. Another simple way to create a feasible solution would be using a forward greedy sensor selection: at iteration $K$, given the set of selected sensors $X^{(K-1)}$, add a new sensor $k$ such that
\begin{equation}\label{eq:optpmugreedy}
k=\arg\min_{i \notin X^{(K-1)}} \hat{f}(X^{(K-1)} \cup \{i\}), \; X^{(K)} = X^{(K-1)} \cup \{k\} 
\end{equation}
We denote the solution of \eqref{eq:optpmugreedy} $x_\text{greedy}$ and its value $f(x_\text{greedy})=f_\text{greedy}$. Then, since $x_\text{opt}$ is a feasible suboptimal solution of \eqref{eq:optpmuconvex}, and $x_\text{feas}$ and $x_\text{greedy}$ are feasible suboptimal solutions of \eqref{eq:optpmu}, the following holds for all metrics:
\begin{equation}\label{eq:bound}
f_\text{convex}\leq f_\text{opt} \leq \min(f_\text{greedy},f_\text{feas}) 
\end{equation} 

For the case of $\hat{f}_\text{D}$, since $\hat{f}_\text{D}(X)-\hat{f}_\text{D}(\emptyset)$ is monotone nonincreasing and supermodular ($-\hat{f}_\text{D}(X)+\hat{f}_\text{D}(\emptyset)$ is monotone nondecreasing submodular), we have an extra lower bound \cite{nemhauser1978analysis} for $f_\text{opt}$ of \eqref{eq:optpmu}: 
\begin{equation}\arraycolsep=1pt \begin{array}{c}
\hat{f}_{\text{D},\text{greedy}}-\hat{f}_\text{D}(\emptyset) \leq (\hat{f}_{\text{D},\text{opt}}-\hat{f}_\text{D}(\emptyset))\alpha  \\[0.1cm]
\alpha = \left(1-\left(1-\frac{1}{N_\text{meas}}\right)^{N_\text{meas}}\right) \in (1-e^{-1},1]
\end{array}
\end{equation}
so that 
\begin{equation}\label{eq:boundsubmod}
\tilde{f}_{\text{D},\text{greedy}} \coloneqq (\hat{f}_{\text{D},\text{greedy}}-\hat{f}_\text{D}(\emptyset))\alpha^{-1}+\hat{f}_\text{D}(\emptyset) \leq f_{\text{D},\text{opt}}
\end{equation}

\vspace{-0.02cm}
\subsection{Solutions with budget constraint}
Although the problem of sensor placement under a cardinality constraint is simpler and has more desired properties \cite{nemhauser1978analysis}, a more realistic and economic representation of the problem would be having a budget constraint on the sensors 
\begin{equation}\label{eq:optpmubudget}
x_\text{opt} = \arg\min_x f(x) \text{ s.t. } \sum_i c_i x_i  \leq b,\; x_i\in\{0,1\} \; \forall i
\end{equation}
where $b$ represents the budget and $c_i$ the cost of installing a sensor at location $i$. This can take into account the extra cost of installing a sensor in a remote area, or locations where specific rights might be required, or different types of sensors, etc. The relaxed convex problem is then
\begin{equation}\label{eq:optpmuconvexbudget}
x_\text{convex} = \arg\min_x f(x) \; s.t. \; \sum_i c_i x_i \leq b,\; x_i\in  [0,1]\; \forall i
\end{equation}

Similar to \eqref{eq:optpmufeas}, a feasible solution $x_\text{feas}$ w.r.t. \eqref{eq:optpmubudget} can be built using the convex solution $x_\text{convex}$ of \eqref{eq:optpmuconvexbudget}, by iteratively taking the sensor with highest $x_{\text{convex},i}$: at iteration $K$, and until the set of possible sensors is empty, $\mathcal{B}=\{i\mid c_i \leq b - \sum_{j\in X^{(K-1)}}c_j \}=\emptyset$, add to the set of selected sensors $X^{(K-1)}$ a new sensor $k \notin X^{(K-1)}$ such that:

\begin{equation}\label{eq:optpmufeasbudget}
k={\arg\max}_{i \in \mathcal{B}, i \notin X^{(K-1)}} x_{\text{convex},i},\; X^{(K)} = X^{(K-1)} \cup \{k\} 
\end{equation}

Likewise, we can adapt the forward greedy selection algorithm \eqref{eq:optpmugreedy} to take costs into account. Therefore, we can use the the cost-effective forward greedy selection algorithm of \cite{leskovec2007cost}, which in every iteration adds the sensor with the lowest ratio of objective function improvement divided by sensor cost. In this version we also return intermediate values required for the bounds. For clarity we detail the whole algorithm here:

\begin{algorithm}[tbh]\label{alg:1}
\caption{Cost-effective forward greedy selection}\label{alg:greedybudget}
\begin{algorithmic}[1]
\REQUIRE $b>0,c\in\mathbb{R}^n,c \geq 0$
\STATE $X_1 \leftarrow \emptyset$
\STATE $k \leftarrow \arg\min_i \frac{\hat{f}(\{i\})}{c_i}$
\WHILE{$c_k\leq b-\sum_{i\in X_1}c_i$}
	\STATE $X_1 \leftarrow X_1\cup \{k\}$
	\STATE $k \leftarrow \arg\min_i \frac{\hat{f}(X_1\cup\{i\})}{c_i}$
\ENDWHILE
\STATE $a \leftarrow k$
\STATE $X_2 \leftarrow X_1$
\WHILE{$ \{i|c_i\leq b-\sum_{j\in X_2}c_j \} \neq \emptyset$}
	\STATE $k \leftarrow \arg\min_{\{i|c_i\leq b-\sum_{j\in X_2}c_j \}} \frac{\hat{f}(X_2\cup\{i\})}{c_i}$
	\STATE $X_2 \leftarrow X_2\cup \{k\}$
\ENDWHILE
\RETURN  $X_1,X_2,a$
\end{algorithmic}
\end{algorithm}

Using Algorithm \ref{alg:greedybudget}, we denote the solutions as $X_\text{greedy1}=X_1$, $X_\text{greedy2}=X_2$ and $X_\text{greedy1a}=X_1\cup\{a\}$  with corresponding values $\hat{f}(X_\text{greedy1})=f_\text{greedy1},\hat{f}(X_\text{greedy2})=f_\text{greedy2},\hat{f}(X_\text{greedy1a})=f_\text{greedy1a}$. Since $X_\text{greedy1} \subseteq X_\text{greedy2}$, $f_\text{greedy2} \leq f_\text{greedy1}$. Note that in the cardinality constrained case we would have $X_1=X_2$, and hence $f_\text{greedy1}=f_\text{greedy2}$; therefore we use $f_\text{greedy}$ in that case. Then, as in \eqref{eq:bound} we know:
\begin{equation}\label{eq:boundbudget}
f_\text{convex}\leq f_\text{opt} \leq \min(f_\text{greedy2},f_\text{feas}) 
\end{equation} 

Again, for the case of $\hat{f}_\text{D}$, since $\hat{f}_\text{D}(X)-\hat{f}_\text{D}(\emptyset)$ is nonincreasing supermodular, we have two extra lower bounds for the value  $f_\text{opt}$ of \eqref{eq:optpmubudget}, which are derived from the proofs in \cite{krause2005note,khuller1999budgeted}: 
\begin{equation}\label{eq:boundsubmodbudget}\arraycolsep=1pt 
\begin{array}{c}
\tilde{f}_{\text{D},\text{greedy1}} \coloneqq (f_{\text{D},\text{greedy1}}-\hat{f}_\text{D}(\emptyset))\beta^{-1}+\hat{f}_\text{D}(\emptyset) \leq f_{\text{D},\text{opt}} \\[0.1cm]
\tilde{f}_{\text{D},\text{greedy1a}} \coloneqq (f_{\text{D},\text{greedy1a}}-\hat{f}_\text{D}(\emptyset))\beta_a^{-1}+\hat{f}_\text{D}(\emptyset) \leq f_{\text{D},\text{opt}} \\[0.1cm]
\end{array}
\end{equation}
with
\begin{equation}\arraycolsep=1pt 
\begin{array}{c}
\beta = \left( 1-\prod_{i\in X_\text{greedy1}} \left( 1-\frac{c_i}{b} \right)\right) \in (0,1] \\[0.1cm]
\beta_a = \left( 1-\prod_{i\in X_\text{greedy1a}} \left( 1-\frac{c_i}{b} \right)\right) \in (1-e^{-1},1]
\end{array}
\end{equation}

\begin{rem} 
If the costs $c_i$ are unitary and the budget $b$ equals the number of sensors, i.e $c_i=1 \; \forall i$ and $b=N_\text{meas}$, then $\alpha=\beta$. Consequently, \eqref{eq:boundsubmod} is a particular case of \eqref{eq:boundsubmodbudget}.
\end{rem}
A drawback of the bound with $\beta$ in \eqref{eq:boundsubmodbudget} is that $\beta$ does not have a lower bound bigger than $0$, and thus it may happen that $\beta^{-1} \to \infty$ and the bound becomes trivial. Nonetheless, we can prove the following result:

\begin{prop}\label{prop:beta}
Let $\gamma \in (0,1]$ define the percentage of budget used, so that $\sum_{i\in X_\text{greedy1}} c_i = \gamma b \leq b$, then we have
\begin{equation}\arraycolsep=1pt
\begin{array}{c}
\beta \in (1-e^{-\gamma},1], \beta_a \in (1-\gamma e^{-\gamma},1]
\end{array}
\end{equation}
\begin{proof}
See Appendix \ref{sec:appproofbeta}
\end{proof}
\end{prop}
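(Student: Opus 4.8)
The plan is to reduce both claims to the elementary inequality $1-t \le e^{-t}$, which holds for every real $t$ and is strict whenever $t \neq 0$, applied factorwise to the products defining $\beta$ and $\beta_a$. As a standing assumption I would use that by feasibility every candidate sensor is individually affordable, $0 < c_i \le b$, so that each factor $1-\frac{c_i}{b}$ lies in $[0,1)$. In particular all the products appearing in $\beta$ and $\beta_a$ are nonnegative, which immediately delivers the upper ends of both intervals: $\beta \le 1$ and $\beta_a \le 1$, with equality attained exactly when a single sensor exhausts the whole budget.

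For the lower bound on $\beta$ I would apply $1-t \le e^{-t}$ with $t=\frac{c_i}{b}$ and multiply:
\begin{equation*}
\prod_{i\in X_\text{greedy1}}\left(1-\frac{c_i}{b}\right) \le \prod_{i\in X_\text{greedy1}} e^{-c_i/b} = \exp\left(-\frac{1}{b}\sum_{i\in X_\text{greedy1}} c_i\right) = e^{-\gamma}.
\end{equation*}
Since $\gamma \in (0,1]$ the set $X_\text{greedy1}$ is nonempty and contains at least one index with $c_i>0$, so at least one factor obeys the strict form of the inequality; hence the product is strictly below $e^{-\gamma}$ and $\beta = 1-\prod > 1-e^{-\gamma}$.

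For $\beta_a$ the extra ingredient is to quantify the cost of the overflow element $a$ returned by Algorithm \ref{alg:greedybudget}. Because the first while loop terminated precisely because $a$ no longer fits, $c_a > b-\sum_{i\in X_\text{greedy1}} c_i = (1-\gamma)b$, so $0 \le 1-\frac{c_a}{b} < \gamma$. Writing the product over $X_\text{greedy1a}=X_\text{greedy1}\cup\{a\}$ as $\bigl(1-\frac{c_a}{b}\bigr)\prod_{i\in X_\text{greedy1}}\bigl(1-\frac{c_i}{b}\bigr)$ and combining $1-\frac{c_a}{b}<\gamma$ with the bound $\prod_{i\in X_\text{greedy1}}(1-\frac{c_i}{b})\le e^{-\gamma}$ from the previous step (both factors nonnegative) gives $\prod_{i\in X_\text{greedy1a}}(1-\frac{c_i}{b}) < \gamma e^{-\gamma}$, and therefore $\beta_a > 1-\gamma e^{-\gamma}$.

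The only genuinely delicate points, and where I expect to be most careful, are twofold. First, the estimate $c_a > (1-\gamma)b$ must be read off from the loop-termination condition of Algorithm \ref{alg:greedybudget}, not from any budget-feasibility of $X_\text{greedy1a}$; indeed $\sum_{i\in X_\text{greedy1a}}c_i = \gamma b + c_a > b$, so $X_\text{greedy1a}$ is over budget and the usual feasibility accounting does not apply. Second, I would track the strictness carefully: the strict ``$>$'' in each lower bound traces back to $\gamma>0$ forcing a positive cost (hence a strictly satisfied factorwise inequality), while the individual-affordability assumption $c_i \le b$ is exactly what keeps every factor in $[0,1)$ and thus guarantees the nonnegativity used for the upper bounds.
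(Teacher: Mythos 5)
Your proof is correct, and the overall skeleton matches the paper's: bound the product $\prod_{i\in X_\text{greedy1}}\bigl(1-\tfrac{c_i}{b}\bigr)$ by $e^{-\gamma}$, then for $\beta_a$ read the extra estimate $b\geq c_a > (1-\gamma)b$ off the termination condition of the first while loop (the paper makes exactly the same standing assumption $c_i\leq b$ and the same identification of $c_a$). Where you genuinely diverge is in the central product estimate: the paper argues by an extremal/equalization step, asserting (without proof, essentially by AM--GM) that subject to $\sum_{i\in X_\text{greedy1}} c_i=\gamma b$ the product is maximized at equal costs $c_i=\gamma b/\abs{X_\text{greedy1}}$, and then invokes $\bigl(1-\tfrac{\gamma}{n}\bigr)^{n}\leq e^{-\gamma}$; you instead apply the pointwise inequality $1-t\leq e^{-t}$ factor by factor, which collapses the whole step into $\prod_i(1-\tfrac{c_i}{b})\leq e^{-\sum_i c_i/b}=e^{-\gamma}$. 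Your route is more elementary (it needs no claim about where the constrained maximum sits) and it buys something the paper's chain technically leaves loose: the proposition states the open intervals $\beta\in(1-e^{-\gamma},1]$ and $\beta_a\in(1-\gamma e^{-\gamma},1]$, yet the paper's displayed inequalities are all non-strict, whereas your tracking of strictness (at least one factor with $c_i>0$ satisfies $1-t<e^{-t}$ strictly, and $1-\tfrac{c_a}{b}<\gamma$ strictly) actually delivers the strict lower bounds as stated, including the careful handling of a possible zero factor when some $c_i=b$.
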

Proposition \ref{prop:beta} shows that if $\gamma \to 0$, then $\beta \to 0$, but $\beta_a \to 1$; so only the bound with $\beta_a$ is useful. If $\gamma \to 1$, $\beta \to \beta_a$, and the bound with $\beta$ may be better.

Given the supermodularity of $f_\text{D}$, see Proposition \ref{prop:supermodular}, we can also consider the bound proposed in \cite{leskovec2007cost}, called online bound, generated by Algorithm \ref{alg:onlinebound} to obtain
\begin{equation}\label{eq:onlinebound}
\max_\mathcal{A}\tilde{f}_{\text{D},\text{online}}(\mathcal{A}) \leq f_{\text{D},\text{opt}}
\end{equation}

\begin{algorithm}[tbh]\label{alg:3}
\caption{Online bound}\label{alg:onlinebound}
\begin{algorithmic}[1]
\REQUIRE $b>0,c\in\mathbb{R}^n,c \geq 0,\mathcal{A}$ (any set)
\STATE $X \leftarrow \emptyset$
\STATE $\tilde{f}_{\text{D},\text{online}}=\hat{f}_{\text{D}}(\mathcal{A})$
\WHILE{$\sum_{i \in X} c_i < b$}
	\STATE $k \leftarrow \arg\min_{\{i\notin \mathcal{A}\}} \frac{\hat{f}_\text{D}(\mathcal{A}\cup \{i\})-\hat{f}_\text{D}(\mathcal{A})}{c_i}$
	\STATE $X \leftarrow X\cup \{k\}$
	\STATE $\tilde{f}_{\text{D},\text{online}} \leftarrow \tilde{f}_{\text{D},\text{online}}+\hat{f}_\text{D}(\mathcal{A}\cup \{k\})-\hat{f}_\text{D}(\mathcal{A})$
\ENDWHILE
\STATE $\tilde{f}_{\text{D},\text{online}} \leftarrow \tilde{f}_{\text{D},\text{online}}+(\hat{f}_\text{D}(\mathcal{A}\cup \{k\})-\hat{f}_\text{D}(\mathcal{A}))\frac{b-\sum_{i\in X} c_i}{c_k}$
\end{algorithmic}
\end{algorithm}

Finally we have all bounds:
\begin{equation}\label{eq:boundsubmodbudgetall}\begin{array}{c}
\max(\max_\mathcal{A}\tilde{f}_{\text{D},\text{online}}(\mathcal{A}),\tilde{f}_{\text{D},\text{greedy1}},\tilde{f}_{\text{D},\text{greedy1a}},f_{\text{D},\text{convex}}) \\[0.1cm]
\leq f_{\text{D},\text{opt}} \leq \min(f_\text{greedy2},f_\text{feas}) 
\end{array}
\end{equation}
\begin{rem}
Note that since $\mathcal{A}$ may be any set of sensors, $\max_\mathcal{A}\tilde{f}_{\mathrm{D},\text{online}}(\mathcal{A})$ is computationally demanding. 
\end{rem}

\vspace{-0.02cm}
\section{Test Case}\label{sec:testcase}
\vspace{-0.01cm}
Now we test the bounds on the 123-bus \cite{testfeeder} and the 8500-node \cite{testfeeder8500} test feeders. For the budget-constrained case in the 123-bus feeder, we consider two more expensive zones to represent heterogeneous costs: $\tilde{c}_i=2$ in the top left blue zone, $\tilde{c}_i=1.5$ in the bottom right red zone, $\tilde{c}_i=1$ elsewhere, see Fig. \ref{fig:123bus}. Then, sensor costs are normalized $c_i = \frac{\tilde{c}_i}{\sum_i \tilde{c}_i}$, so that their average is $1$, and thus the budget is approximately the number of sensors deployed. In the 8500-node feeder, we assign random normal distributed costs: $c_i \sim \mathcal{N}(1,0.1)$. For any number of measurements or budget, the sensor locations can be recovered from each solution $x$.

The algorithms are coded in Python and run on an Intel Core i7-6700HQ CPU at 2.60GHz with 16GB of RAM. For the convex optimization problems \eqref{eq:optpmuconvex} and \eqref{eq:optpmuconvexbudget}, we use the function \textit{minimize} of the \textit{scipy.optimize} package for the 123-bus test case, and a projected gradient descent algorithm for the 8500-node, see Appendix \ref{sec:appprojgrad}, where we use an efficient projection algorithm as in \cite{insense2018} in order to handle the computational complexity of the 8500-node feeder. Additionally, since function evaluations may take a few seconds for the 8500-node feeder, we use matrix algebra results to speed up the computation of the greedy solutions in \eqref{eq:optpmugreedy} and Algorithm \ref{alg:greedybudget}, see Appendix \ref{sec:appgreedy}. Otherwise, it could take several days to evaluate all possible nodes in every step.

\begin{figure}
\centering
\vspace{-0.2cm}
\includegraphics[width=7cm,height=4.7cm]{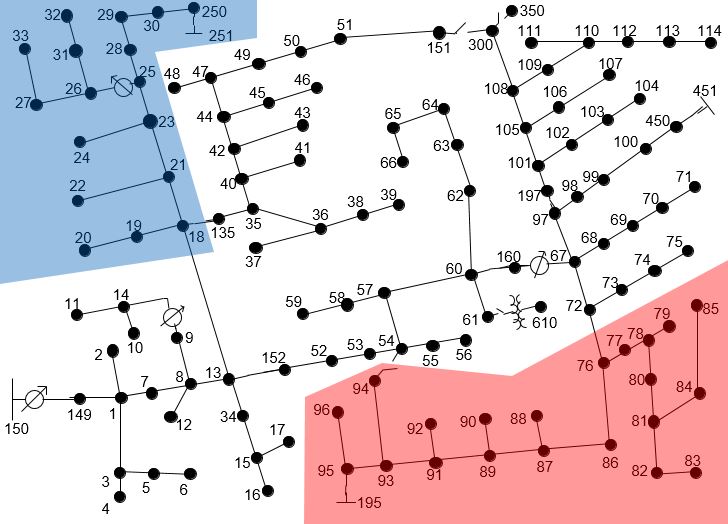}    
\vspace{-0.15cm}
\caption{123-bus test feeder from \cite{testfeeder} with different cost zones.} 
\vspace{-0.2cm}
\label{fig:123bus}
\end{figure}

Fig. \ref{fig:results} shows the bounds for the 123-bus and the 8500-node test feeders. Both the A,D-optimal metrics are analyzed, under a cardinality and a budget constraint, for different numbers of sensors and budgets respectively. For simplicity, in the online bound $\tilde{f}_{\text{D},\text{online}}(\mathcal{A})$ in \eqref{eq:onlinebound}, we have only considered $\mathcal{A}=\emptyset$ instead of any $\mathcal{A}$. However, we have observed that other options, like using a greedy solution for $\mathcal{A}$, produce similar bounds. The online bound has also been applied to the cardinality constrained problem. The yellow shaded area with squares shows the area between the minimum upper bound and the maximum lower bound, and thus the possible locations of the optimal solution $f_{\{\text{A,D}\},\text{opt}}$.

\begin{figure*}[!t]
\centering
\subfloat[123-bus A-optimal with cardinality constraint]{\includegraphics[width=8cm]{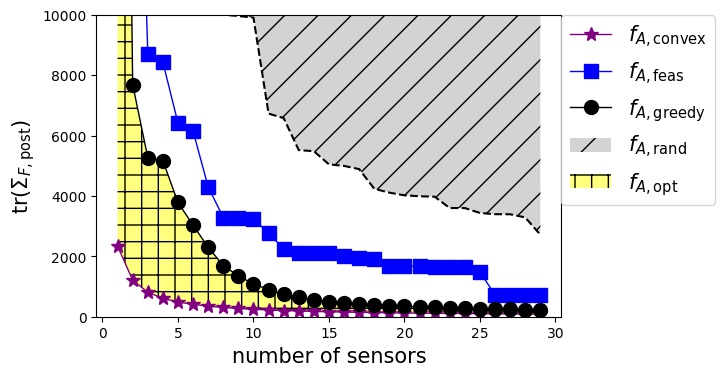}
\label{fig:tracesigma}}
\subfloat[123-bus A-optimal with budget constraint]
{\includegraphics[width=8cm]{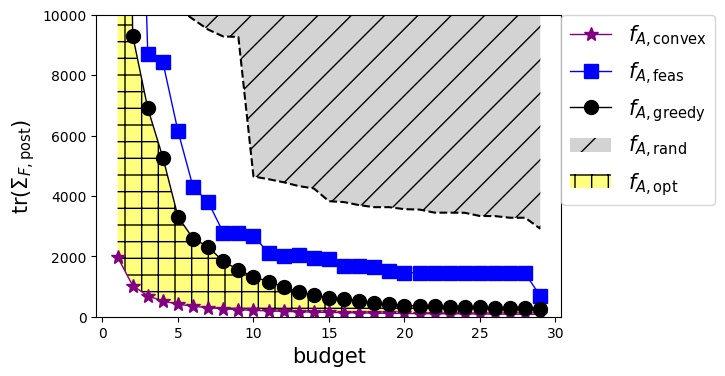}
\label{fig:tracesigmabudget}}
\hfil
\vspace{-0.15cm}
\subfloat[123-bus D-optimal with cardinality constraint]{\includegraphics[width=8cm]{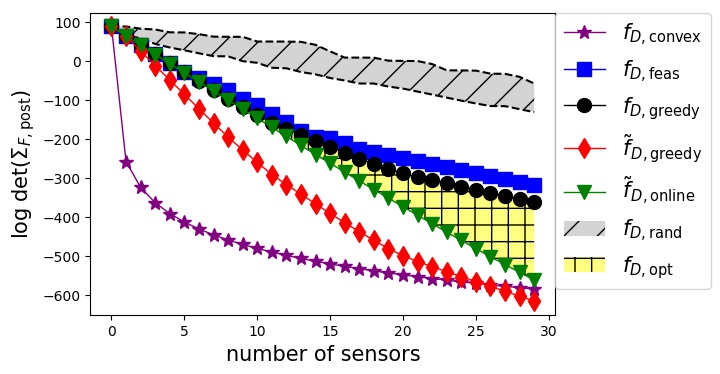}
\label{fig:logdetsigma}}
\subfloat[123-bus D-optimal with budget constraint]
{\includegraphics[width=8cm]{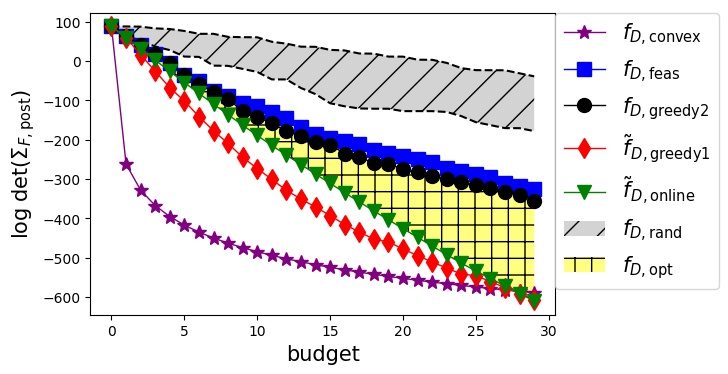}
\label{fig:logdetsigmabudget}}
\hfil
\vspace{-0.00cm}
\subfloat[8500-node A-optimal with cardinality constraint]{\includegraphics[width=8cm]{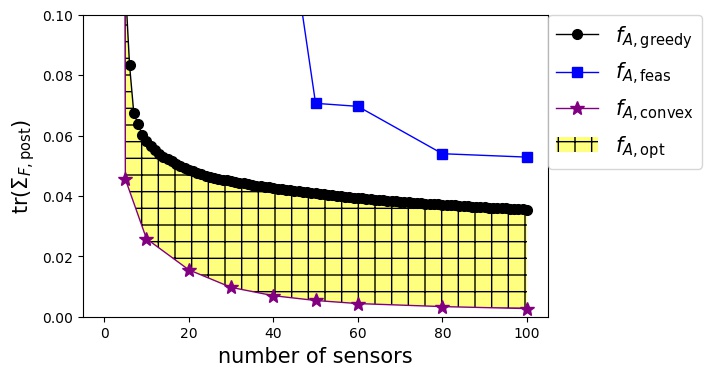}
\label{fig:tracesigmabig}}
\hspace{-0.15cm}
\subfloat[8500-node A-optimal with budget constraint]{\includegraphics[width=8cm]{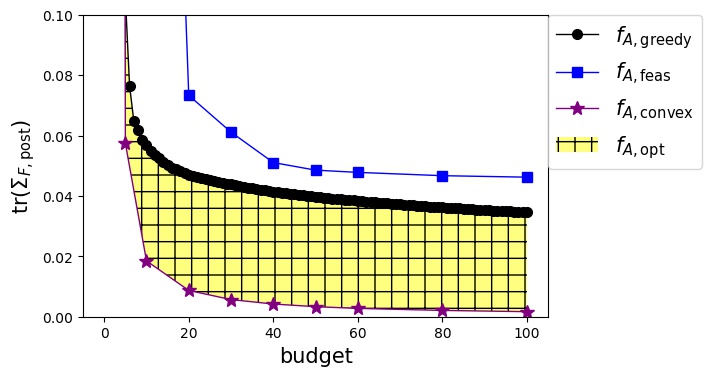}
\label{fig:tracesigmabudgetbig}}
\vspace{-0.00cm}
\subfloat[8500-node D-optimal with cardinality constraint]{\includegraphics[width=8cm]{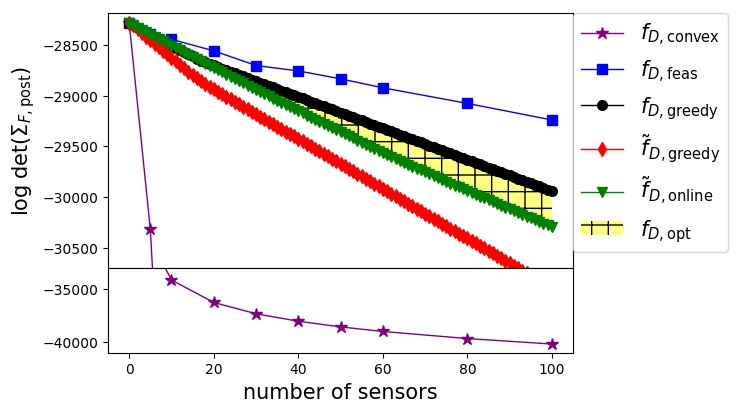}
\label{fig:logdetsigmabig}}
\hspace{-0.15cm}
\subfloat[8500-node D-optimal with budget constraint]{\includegraphics[width=8cm]{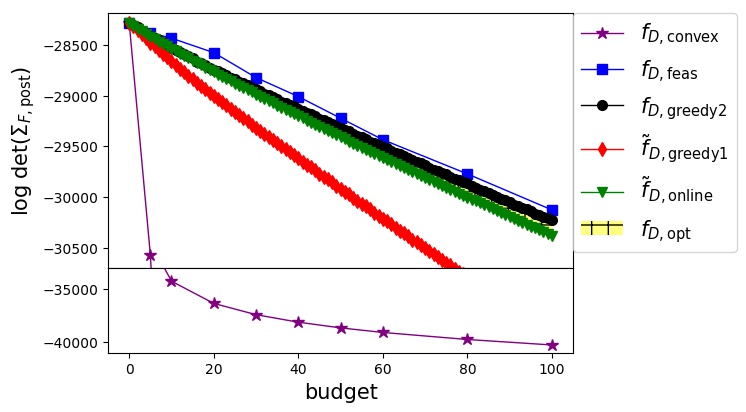}
\label{fig:logdetsigmabudgetbig}}
\vspace{-0.1cm}
\caption{Plots for the A,D-optimal metrics, under cardinality and budget constraints, the 123-bus and 8500-node cases, showing the lower bounds and the upper bounds. The grey shaded area filled with oblique lines shows the values $f_{\{\text{A,D}\},\text{rand}}$ for random configurations of sensors (100 samples). The yellow shaded area filled with a grid of horizontal and vertical lines shows the possible locations of the optimal solution $f_{\{\text{A,D}\},\text{opt}}$.}
\label{fig:results}
\end{figure*}

For the D-optimal metric, in Fig. \ref{fig:logdetsigma}, \ref{fig:logdetsigmabudget}, \ref{fig:logdetsigmabig} and \ref{fig:logdetsigmabudgetbig}, it can be observed that the online bound $\tilde{f}_{\text{D},\text{online}}$ \eqref{eq:onlinebound} outperforms the rest of the lower bounds ($\tilde{f}_{\text{D},\text{greedy}}$ \eqref{eq:boundsubmod}, $\tilde{f}_{\text{D},\text{greedy1}}$ \eqref{eq:boundsubmodbudget}, $f_{\text{D},\text{convex}}$ \eqref{eq:optpmuconvex}, \eqref{eq:optpmuconvexbudget}) for a small number of sensors under a cardinality and under a budget constraint, especially in the 8500-node feeder. For clarity, $\tilde{f}_{\text{D},\text{greedy1a}}$ \eqref{eq:boundsubmodbudget} has not been plotted, since the results were almost equal to $\tilde{f}_{\text{D},\text{greedy1}}$. 
For a small number of sensors, it is remarkable how the lower bound based on convex relaxations $f_{\text{D},\text{convex}}$ performs poorly compared to the other lower bounds, as it produces too optimistic results. This is a result of the high SE uncertainty in all nodes of the grid before the PMU placement; as a consequence, the convex optimization allocates partial unfeasible PMUs with $0 < x_i\ll 1$ in many nodes of the grid, instead of full PMUs with $x_i=1$ in a few nodes. This is equivalent to adding more sensors with an error with larger relative standard deviation. This cannot be done, since PMUs have a fixed maximum relative error by design \cite{martin2008exploring}, see section \ref{sec:info}.

For the A-optimal metric, in Fig. \ref{fig:tracesigma}, \ref{fig:tracesigmabudget}, \ref{fig:tracesigmabig} and \ref{fig:tracesigmabudgetbig} we do not have the lower bounds based on the supermodularity property; however, the lower bound based on convex relaxations $f_{\text{A},\text{convex}}$ \eqref{eq:optpmuconvex}, \eqref{eq:optpmuconvexbudget} performs better than for the D-optimal metric. It still can be observed that for a small number of sensors this convex bound is far from the greedy and feasible solutions ($f_{\text{A},\text{feas}}$ \eqref{eq:optpmufeas}, $f_{\text{A},\text{greedy}}$ \eqref{eq:optpmugreedy}, $f_{\text{A},\text{greedy2}}$ in Algorithm \eqref{alg:greedybudget}), but not as much as in the D-optimal metric case, and the bound approaches the values of the solutions quicker.

As expected, for both metrics simple greedy or feasible solutions ($f_{\{\text{A,D}\},\text{greedy}}$, $f_{\{\text{A,D}\},\text{greedy2}}$, $f_{\{\text{A,D}\},\text{feas}}$) perform better than random configurations ($f_{\{\text{A,D}\},\text{rand}}$), even better than the best random configuration out of the 100 samples displayed.

\vspace{-0.02cm}
\section{Conclusions}\label{sec:conc}
\vspace{-0.01cm}
We have stated the problem of optimal sensor placement for deploying PMUs to minimize the uncertainty of state estimation in distribution grids, both under a cardinality and a budget constraint. We have analyzed the properties of different metrics, concretely, convexity and supermodularity. Using these properties, we have derived a set of bounds that have enabled us to narrow the gap of the possible value of the optimal solution of this intractable problem. Since the optimal solution is unpractical to obtain in large grids, suboptimal solutions are required instead. Using these bounds, we can compute the maximum gap between any suboptimal solution and the optimal solutions. Moreover, we have observed, that the bounds produced by the supermodularity property are especially relevant when only a small number of sensor can be installed in large networks. In this case, a bound based on convex relaxation produces a too optimistic result.

Future work could include extending these results to take into account network reconfiguration due to different states of switches. Moreover, more exhaustive search algorithms could be developed to obtain solutions closer to the lower bounds. Additionally, it would be interesting to analyze how the choice of the metric affects the effectiveness of the bounds.

\vspace{-0.02cm}
\appendices
\vspace{-0.01cm}
\section{Proof of Proposition \ref{prop:supermodular}}\label{sec:appsupermodD}
\vspace{-0.01cm}
\begin{proof}
The constant term $\hat{f}_\text{D}(\emptyset)$ is only necessary to ensure that the function is $0$ when $X=\emptyset$. For the rest of the proof this term is not necessary since for any two sets $X,Y$ we have: $(\hat{f}_\text{D}(Y)-\hat{f}_\text{D}(\emptyset))-(\hat{f}_\text{D}(X)-\hat{f}_\text{D}(\emptyset))=\hat{f}_\text{D}(Y)-\hat{f}_\text{D}(X)$. \\
This proof is very similar to the one used for controlability gramians in \cite{summers2016submodularity}. Consider the set of sensors $X\subseteq Y$ and a sensor $a$ so that $\{a\}\notin Y$. Now let us define
\begin{equation*}\arraycolsep=1pt \begin{array}{rl}
g_a(X)\coloneqq &\hat{f}_\text{D}(X\cup \{a\})-\hat{f}_\text{D}(X) \\[0.1cm]
\tilde{f}_\text{D}(\gamma,X,Y)\coloneqq &\log\text{det}\big((\Sigma_{F,\text{prior}}^{-1}+ \Lambda_X+\gamma \Lambda_{Y\setminus X})^{-1}\big) \\[0.1cm] 
\tilde{g}_a(\gamma,X,Y)\coloneqq & \tilde{f}_\text{D}(\gamma,X\cup \{a\},Y\cup \{a\})-\tilde{f}_\text{D}(\gamma,X,Y)
\end{array}
\end{equation*}
So that we have $\tilde{f}_\text{D}(0,X,Y)=\hat{f}_\text{D}(X)$, $\tilde{f}_\text{D}(1,X,Y)=\hat{f}_\text{D}(Y)$, $\tilde{g}_a(0,X,Y)=g_a(X)$ and $\tilde{g}_a(1,X,Y)=g_a(Y)$. Now we will prove that $g_a(Y)\geq g_a(X)$ by computing the gradient of $\tilde{g}_a$:
\begin{equation*}\arraycolsep=1pt \begin{array}{rl}
\frac{\partial \tilde{f}_\text{D}}{\partial \gamma}(\gamma,X,Y)
 = & -\text{tr}\big((\Sigma_{F,\text{prior}}^{-1}+ 
\Lambda_X+\gamma \Lambda_{Y\setminus X})^{-1}\Lambda_{Y\setminus X}\big) \\[0.1cm]

\frac{\partial \tilde{g}_a}{\partial \gamma}(\gamma,X,Y) 

= &\frac{\partial \tilde{f}_\text{D}}{\partial \gamma}(\gamma,X\cup \{a\},Y\cup \{a\}) - \frac{\partial \tilde{f}_\text{D}}  {\partial \gamma}(\gamma,X,Y) \\[0.1cm]

= &-\text{tr}\Big(\big((\Sigma_{F,\text{prior}}^{-1}+ 
\Lambda_{X\cup \{a\}}+\gamma \Lambda_{Y\setminus X})^{-1} \\[0.1cm]
&-(\Sigma_{F,\text{prior}}^{-1}+ \Lambda_X+\gamma \Lambda_{Y\setminus X})^{-1}\big)
\Lambda_{Y\setminus X}\Big)
\end{array}
\end{equation*}

Since $X\subseteq Y$, we have $\Lambda_{Y\setminus X}\succeq 0$. Since $\Lambda_{\{a\}}\succeq 0$ and $\Lambda_{X\cup \{a\}} = \Lambda_{\{a\}} +\Lambda_X$, we have
\begin{equation*}
\Sigma_{F,\text{prior}}^{-1} + \Lambda_{X\cup \{a\}}+\gamma \Lambda_{Y\setminus X}\succeq\Sigma_{F,\text{prior}}^{-1}+ 
\Lambda_{X}+\gamma \Lambda_{Y\setminus X}\succeq 0
\end{equation*}
and thus 
\begin{equation*}
(\Sigma_{F,\text{prior}}^{-1}+ \Lambda_{X}+\gamma \Lambda_{Y\setminus X})^{-1}\succeq(\Sigma_{F,\text{prior}}^{-1}+ 
\Lambda_{X\cup \{a\}}+\gamma \Lambda_{Y\setminus X})^{-1}
\end{equation*}

Consequently, $\frac{\partial \tilde{g}_a}{\partial \gamma}(\gamma,X,Y) \geq 0$ and
\begin{equation*} \begin{array}{c}
g_a(Y)-g_a(X)=\tilde{g}_a(1,X,Y)-\tilde{g}_a(0,X,Y) \\[0.1cm]
=\int_0^1 \frac{\partial \tilde{g}_a}{\partial \gamma}(\gamma,X,Y) \partial \gamma \geq 0
\end{array}
\end{equation*}
so that
\begin{equation*}
\hat{f}_\text{D}(Y\cup \{a\})-\hat{f}_\text{D}(Y) \geq \hat{f}_\text{D}(X\cup \{a\})-\hat{f}_\text{D}(X)
\end{equation*}
\vspace{-0.7cm}
\end{proof}

\vspace{-0.02cm}
\section{Proof of Proposition \ref{prop:beta}}\label{sec:appproofbeta}
\vspace{-0.01cm}
\begin{proof}
If $\sum_{i\in X_\text{greedy1}} c_i = \gamma b$, then $\prod_{i\in X_\text{greedy1}} \left( 1-\frac{c_i}{b} \right)$ achieves its maximum at $c_i = \frac{\gamma b}{\abs{X_\text{greedy1}}}$ for all $i$, where $|\cdot|$ is the number of elements in a set. Then we have: 
\begin{equation*}\arraycolsep=1pt
\begin{array}{rl}
\beta & = ( 1-\prod_{i\in X_\text{greedy1}} ( 1-\frac{c_i}{b})) \\[0.1cm]
&\geq \Big( 1- \Big( 1-\frac{\gamma}{\abs{X_\text{greedy1}}} \Big)^{\abs{X_\text{greedy1}}} \Big) \geq ( 1- e^{-\gamma} )
\end{array}
\end{equation*}
We assume that $c_i \leq b$ for all $i$ (sensors with $c_i > b$ are discarded since they cannot be installed). Then we know that $b>c_a > b - \gamma b$, where $c_a$ is the cost of element $a$, and we have:
\begin{equation*}\arraycolsep=1pt
\begin{array}{rl}
\beta_a & = ( 1-\prod_{i\in X_\text{greedy1a}} ( 1-\frac{c_i}{b} )) \\[0.1cm]
&\geq \Big( 1- \Big( 1-\frac{\gamma}{\abs{X_\text{greedy1}}} \Big)^{\abs{X_\text{greedy1}}} (1-\frac{c_a}{b}) \Big) \\[0.1cm]
& \geq ( 1-e^{-\gamma}(1-\frac{b - \gamma b}{b}) ) = (1-\gamma e^{-\gamma})
\end{array}
\end{equation*}
\vspace{-0.7cm}
\end{proof}

\vspace{-0.02cm}
\section{Efficient Projected Gradient Descent}\label{sec:appprojgrad}
\vspace{-0.01cm}
To solve the convex problem under a cardinality constraint \eqref{eq:optpmuconvex} for the 8500-node feeder, we use a projected gradient descent method using the gradient expressions in \eqref{eq:grad}:
\begin{equation*}\begin{array}{l}
x^{(K+1)} = \Pi_\mathcal{X}(x^{(K)}-\alpha^{(K)}\nabla f(x^{(K)})) \\
\mathcal{X} = \{ x \mid \sum_i x_i = N_\text{meas},\; x_i \in [0,1] \}
\end{array}
\end{equation*}
where the suffix $(\cdot)^{(K)}$ denotes the value at iteration $K$. We use $\alpha^{(K)} = \frac{\alpha}{k\norm{\nabla f(x^{(K)})}}_2$ to guarantee convergence of the method \cite{nedic2001incremental}, where $\alpha$ is a design parameter. For a more efficient projection $\Pi_\mathcal{X}(\cdot)$, we use the scaled boxed-simplex projection algorithm proposed in \cite{insense2018}, which converges in a finite number of iterations, as opposed to the projection algorithm proposed in \cite{kekatos2012optimal}.

When solving the problem under a budget constraint \eqref{eq:optpmuconvexbudget}, we can use the change of variables $y_i = x_i c_i$, and the function $f_c(y)=f\Big(\frac{y}{c}\Big)$, and solve the alternative problem
\begin{equation*}
y_\text{convex} = \arg\min_y f_c(y) \; s.t. \; \sum_i y_i \leq b,\; y_i\in  [0,c_i]\; \forall i
\end{equation*}
using the projected gradient descent method:
\begin{equation*}\begin{array}{l}
y^{(K+1)} = \Pi_\mathcal{Y}(y^{(K)}-\alpha^{(K)}\nabla f_c(y^{(K)})) \\
\mathcal{Y} = \{ y \mid \sum_i y_i = b,\; y_i \in [0,c_i] \}
\end{array}
\end{equation*}
where an efficient implementation of the projection $\Pi_\mathcal{Y}(\cdot)$ can be derived by using a modified version of the one in \cite{insense2018} by changing the $1$ to $c_i$ for each respective $y_i$. The details of the algorithm can be found in \cite{picallo2018convexPMU}. 

\vspace{-0.02cm}
\section{Efficient Greedy Computations}\label{sec:appgreedy}
\vspace{-0.01cm}
The posterior covariance at iteration $K$ can be defined as
\begin{equation*}
\Sigma_{F,\text{post}}^{(K)}
=(\Sigma_{F,\text{prior}}^{-1}+\sum_{i \in X^{(K)}}(C_\text{meas}F)_{i,\bullet}^*(C_\text{meas}F)_{i,\bullet}(\Sigma_\text{meas}^{-1})_{i,i})^{-1}
\end{equation*}
With this expression, we use Woodbury's matrix identity \cite{matrixcalc} to obtain a simplified expression that is easier to evaluate:
\begin{equation*} \begin{array}{l}
\tilde{f}_\text{A}(X^{(K)} \cup \{j\}) \\
= \text{tr} \big( ( (\Sigma_{F,\text{post}}^{(K)})^{-1}+(C_\text{meas}F)_{j,\bullet}^*(C_\text{meas}F)_{j,\bullet}(\Sigma_\text{meas}^{-1})_{j,j} )^{-1} \big) \\
= \text{tr} \big( \Sigma_{F,\text{post}}^{(K)} - 
\frac{\Sigma_{F,\text{post}}^{(K)}(C_\text{meas}F)_{j,\bullet}^*
(C_\text{meas}F)_{j,\bullet}\Sigma_{F,\text{post}}^{(K)}}{(\Sigma_\text{meas}^{-1})_{j,j}^{-1} + (C_\text{meas}F)_{j,\bullet}\Sigma_{F,\text{post}}^{(K)} (C_\text{meas}F)_{j,\bullet}^*} \big) \\
= \text{tr}(\Sigma_{F,\text{post}}^{(K)}) - \frac{(C_\text{meas}F)_{j,\bullet}(\Sigma_{F,\text{post}}^{(K)})^2(C_\text{meas}F)_{j,\bullet}^*}{(\Sigma_\text{meas}^{-1})_{j,j}^{-1} + (C_\text{meas}F)_{j,\bullet}\Sigma_{F,\text{post}}^{(K)} (C_\text{meas}F)_{j,\bullet}^*}
\end{array}
\end{equation*}
Then, we use Sylvester's determinant identity \cite{matrixcalc} to get
\begin{equation*} \begin{array}{l}
\tilde{f}_\text{D}(X^{(K)} \cup \{j\}) \\
= \log\text{det} \big( ( (\Sigma_{F,\text{post}}^{(K)})^{-1} \hspace{-0.15cm} + \hspace{-0.1cm} (C_\text{meas}F)_{j,\bullet}^*(C_\text{meas}F)_{j,\bullet}(\Sigma_\text{meas}^{-1})_{j,j} )^{-1} \hspace{-0.05cm} \big) \\
= \log\text{det} (\Sigma_{F,\text{post}}^{(K)}) \\
- \log\text{det}\big( I_\text{d} + \Sigma_{F,\text{post}}^{(K)}(C_\text{meas}F)_{j,\bullet}^*(C_\text{meas}F)_{j,\bullet}(\Sigma_\text{meas}^{-1})_{j,j} \big) \\
= \log\text{det} (\Sigma_{F,\text{post}}^{(K)}) \\
- \log(1+(\Sigma_\text{meas}^{-1})_{j,j}(C_\text{meas}F)_{j,\bullet}\Sigma_{F,\text{post}}^{(K)}(C_\text{meas}F)_{j,\bullet}^*)
\end{array}
\end{equation*}

\bibliographystyle{IEEEtran}
\bibliography{ifacconf}



%

\begin{IEEEbiographynophoto}{Miguel Picallo}
received two Diplomas in Mathematics and Industrial Engineering and the M.Sc. in Management Science and Engineering from Stanford University. He is currently pursuing his PhD about state estimation and optimal power flow in Distribution Grids at the Delft University of Technology, and collaborated with GE Global Research, Germany.
\end{IEEEbiographynophoto}

\vskip 0pt plus -1fil

\begin{IEEEbiographynophoto}{Adolfo Anta}
is currently a senior researcher at the Austrian Institute of Technology (AIT), Vienna, Austria. He received the M.Sc. and Ph.D. degrees in Control Systems from the University of California at Los Angeles, USA, in 2007 and 2010, respectively. From 2012 to 2018 he worked as lead researcher at GE Global Research Europe, Germany. His research interests cover a wide range of control applications, in particular stability issues in power systems.
\end{IEEEbiographynophoto}

\vskip 0pt plus -1fil

\begin{IEEEbiographynophoto}{Bart De Schutter}
(IEEE member since 2008, senior member since 2010, fellow since 2019) is a full professor at the Delft Center for Systems and Control of Delft University of Technology in Delft, The Netherlands. He is senior editor of the IEEE Transactions on Intelligent Transportation Systems. His current research interests include intelligent transportation, infrastructure  and power systems, hybrid systems, and multi-level control.
\end{IEEEbiographynophoto}

\end{document}